\newtheorem{theorem}{Theorem}
\newtheorem{corollary}[theorem]{Corollary}
\newtheorem{definition}[theorem]{Definition}
\newtheorem{lemma}[theorem]{Lemma}
\newenvironment{proof}[1][Proof]{\noindent\textbf{#1.} }{
	\phantom{qed} \hfill \rule{0.5em}{0.5em}
}
\newcommand{\name}[1]{{\mathsf{#1}}}
\newcommand{\intersect}{{\cap}}
\newcommand{\union}{{\cup}}
\newcommand{\set}[1]{{\left\{#1\right\}}}
\newcommand{\powerset}[2][{}]{{{\mathcal{P}_{#1}\left(#2\right)}}}
\newcommand{\st}[1]{{\left|#1\right.}}
\newcommand{\project}[2]{{#1|_{#2}}}
\newcommand{\dotprod}[2]{{\langle#1,#2\rangle}}
\newcommand{\isPSD}{{\succeq 0}}
\newcommand{\lp}[3]{{
  \begin{aligned}
    #1 &\ \ \ #2\\
    \text{s.t.} & \ \ \ \left\{
      \begin{aligned}
        #3
      \end{aligned}
    \right.
  \end{aligned}
}}
\newcommand{\comment}[1]{{}}
\newcommand{\Knapsack}{{\name{Knapsack}}}
\newcommand{\hierarchy}[3]{{\name{#1}^{#2}\left(#3\right)}}
\newcommand{\SA}[2]{{\hierarchy{SA}{#1}{#2}}}
\newcommand{\sa}[2]{{\hierarchy{sa}{#1}{#2}}}
\newcommand{\La}[2]{{\hierarchy{La}{#1}{#2}}}
\newcommand{\la}[2]{{\hierarchy{la}{#1}{#2}}}
\newcommand{\conv}[1]{{\name{conv}\left(#1\right)}}
\title{Integrality Gaps of Linear and Semi-definite Programming Relaxations for Knapsack}
\author{Anna R. Karlin\footnotemark[1]
\and  Claire Mathieu\footnotemark[2] \and C. Thach Nguyen\footnotemark[1]}
\date{}
\begin{document}
\maketitle
\renewcommand{\thefootnote}{\fnsymbol{footnote}}
\footnotetext[1]{University of Washington, email: 
{\tt \{karlin,ncthach\}@cs.washington.edu}. Supported by NSF Grant CCF-0635147 and a Yahoo! Faculty Research Grant.}
\footnotetext[2]{Brown University, email: {\tt claire@cs.brown.edu}}
\renewcommand{\thefootnote}{\arabic{footnote}}

\begin{abstract}
	In this paper, we study the integrality gap of the $\name{Knapsack}$ 
linear program in
the Sherali-Adams and Lasserre hierarchies.
First, we show that an integrality gap of $2-\epsilon$ persists up to a linear
number of rounds of Sherali-Adams, despite the fact that $\name{Knapsack}$ admits 
a fully polynomial time approximation scheme~\cite{IK,L}.
Second, we show that the Lasserre hierarchy closes the gap quickly.
Specifically, after $t$ rounds of Lasserre, the integrality gap
decreases to $t/(t-1).$
To the best of our knowledge, 
this is the first positive result that uses more than
a small number of rounds in the Lasserre hierarchy.
Our proof uses a decomposition theorem for the Lasserre hierarchy, which 
may be of independent interest.

\end{abstract}

\section{Introduction}
Many approximation algorithms work in two phases: first, solve a linear 
programming (LP) or semi-definite programming (SDP) relaxation; 
then, round the fractional solution to obtain a feasible integer solution 
to the original problem. 
This paradigm is amazingly powerful; in particular, under the unique
game conjecture, it yields the best possible ratio for 
$\name{MaxCut}$ and a wide variety of other problems, 
see e.g.~\cite{Rag08}.

However, these algorithms have a limitation.
Since they are usually analyzed by comparing the value of the output 
to that of the fractional solution, we cannot generally hope to get a better 
approximation ratio than the integrality gap of the relaxation.
Furthermore, for any given combinatorial optimization problem, 
there are many possible LP/SDP relaxations, and it is difficult to determine
which relaxations have the best integrality gaps.

This has lead to efforts to provide systematic procedures for constructing
a sequence of increasingly tight mathematical programming relaxations
for 0-1 optimization problems.
A number of different procedures of this type have been proposed: by Lov\'asz
and Schrijver~\cite{LS}, Sherali and Adams~\cite{SA90}, Balas, Ceria
and Cornuejols ~\cite{BCC}, Lasserre ~\cite{Lass1,Lass2} and others.
While they differ in the details, they all operate in a series of rounds
starting from an LP or SDP relaxation,  eventually ending with an
exact integer formulation. 
The strengthened relaxation after $t$ rounds can typically be solved in 
$n^{O(t)}$ time and, roughly, satisfies the property that the values of 
any $t$ variables in the original relaxation can be expressed as the 
projection of a convex combination of integer solutions.

A major line of research in this area has focused on understanding the
strengths and limitations of these procedures.  Of particular interest
to our community is the question of how the integrality gaps for
interesting combinatorial optimization problems evolve through a
series of rounds of one of these procedures. On the one hand, if the
integrality gaps of successive relaxations drop sufficiently fast,
there is the potential for an improved approximation algorithm (see \cite{Chlamtac, CS, BCG,BO} for example). 
On the other hand, a large integrality gap persisting for
a large, say logarithmic, number of rounds rules out (unconditionally) a very
wide class of efficient approximation algorithms, namely those whose output
is analyzed by comparing it to the value of a class of LP/SDP relaxations.
This implicitly contains
most known sophisticated approximation algorithms for many problems including
$\name{Sparsest Cut}$ and $\name{Maximum Satisifiability}$.
Indeed, serveral very strong negative results of this type have been 
obtained (see~\cite{ABLT,AAT,BGHMP,CMM,GMPT,PS,Schoen,STT,Tourlakis,RS,GMT,BM} 
and others).
These are also viewed as lower bounds of approximability in certain 
restricted models of computation.

How strong are these restricted models of computation? 
In other words, how much do lower bounds in these models tell us about
the intrinsic hardness of the problems studied?
%which are all well-known
%hard problems (MaxCut, Sparsest Cut, Vertex Cover, etc.)?
To explore this question, we focus on one problem that is well-known
to be ``easy'' from the viewpoint of approximability: $\name{Knapsack}$.
We obtain the following results:

\begin{itemize}

\item We show that an integrality gap close to 2 persists up to
a linear number of rounds of Sherali-Adams. (The integrality 
gap of the natural LP is 2.) 

This is interesting since $\name{Knapsack}$ has a fully
polynomial time approximation scheme~\cite{IK,L}. 
%This
%shows that integrality gap lower bounds for hierarchies such as
%Sherali-Adams do not necessarily imply an inherent bound on
%approximability. 
%In other words, 
This confirms and amplifies what has already been observed in other contexts 
(e.g ~\cite{CMM}): the Sherali-Adams restricted model of computation has 
serious weaknesses: a lower bound in this model does not necessarily 
imply that it is difficult to get a good approximation algorithm.

\item We show that Lasserre's hierarchy closes the gap
  quickly. Specifically, after $t$ rounds of Laserre, the
  integrality gap decreases to $t/(t-1)$. 

It is known that a few rounds of Lasserre can yield better relaxations.
For example, two rounds of Lasserre applied to the $\name{MaxCut}$ LP yields
an SDP that is at least as strong as that used by 
Goemans and Williamson to get the 
best known approximation algorithm, and the SDP in~\cite{ARV09} which
leads to the best known approximation algorithm for $\name{Sparsest Cut}$ can be 
obtained by three rounds of Lasserre. 
However, to the best of our knowledge,
this is the first {positive} result that utilizes more than a small constant
number of rounds in the Lasserre hierarchy.

 \end{itemize}

\subsection{Related Work}
Many known approximation algorithms can be recognized in hindsight as 
starting from a natural relaxation and strengthening it using a couple 
of levels of lift-and-project. 
The original hope~\cite{ABL} had been to use lift and project systems 
as a systematic approach to designing novel algorithms with better 
approximation ratios. 
Instead, the last few years have mostly 
seen the emergence of a multitude of lower bounds. 
Indeed, lift and project systems have been studied mostly for well known 
difficult problems: $\name{MaxCut}$ \cite{CMM,FdlVKM,STT}, 
$\name{Sparsest Cut}$, 
\cite{CMM,CKN} $\name{Vertex Cover}$ 
\cite{AAT,ABL,ABLT,Charikar,GMPT,GMT,HMM,STT,Tourlakis}, 
$\name{Hypergraph Vertex Cover}$, $\name{TSP}$ \cite{C2}, 
$\name{Maximum Acyclic Subgraph}$ \cite{CMM}, $\name{CSP}$ \cite{Schoen,T}, 
and more.

The $\name{Knapsack}$ problem~\cite{MT,KPP}  has a  fully
polynomial time approximation scheme~\cite{IK,L}. 
The natural LP relaxation (to be stated in full detail in the next section) 
has an integrality gap of $2-\epsilon$ \cite{KPP}. 
Although we are not aware of previous work on using the lift and project 
systems for $\name{Knapsack}$, the problem of strengthening the LP relaxation via 
addition of well-chosen inequalities has been much the object of much interest 
in the past in the mathematical programming community, as 
stronger LP relaxations are extremely useful to speed up branch-and-bound 
heuristics. 
The knapsack polytope was studied in detail by Weismantel \cite{W}. 
Valid inequalities were studied in \cite{B,HJP,HP,W2,BZ}. 
In particular, whenever $S$ is a minimal set (w.r.to inclusion) that does not 
fit in the knapsack, then 
$\sum_{S\cup \{j: \forall i\in S, w_j\geq w_i\} } x_j\leq |S|-1$ is a 
valid inequality. 
Generalizations and variations were also studied in \cite{EGP,HZ,Z}.
Thus, in spite of the existence of a dynamic program to solve the 
problem, $\name{Knapsack}$ is fundamental enough that understanding the 
polytope (and its lifted tightenings) is of intrinsic interest.
In \cite{Bie08}, Bienstock formulated LP
with arbitrary small integrality gaps for $\name{Knapsack}$ using 
``structural disjunctions'', and asked if the popular hierarchies reduce
the gap of the $\name{Knapsack}$ linear program. 
Our results give a negative answer for Sherali-Adams and a strong
affirmative one for Lasserre.

Our results confirm the indication from \cite{KS,RS} for example that the 
Sherali-Adams lift and project is not powerful enough to be an indicator
of the hardness of problems.
However, it should be noted that if the problem was phrased as a decision problem and the objective function was replaced by an additional constraint of the constraint polytope, then Sherali-Adams would succeed in reducing the integrality gap; thus the choice of the initial LP formulation is critical.
On the other hand, little is know about the Lasserre hierarchy, as 
the first negative results were about $k$-CSP \cite{Schoen,T}. 
Our positive result leaves open the possibility that the Lasserre hierarchy 
may have promise as a tool to capture the intrinsic difficulty of problems.

\section{Preliminaries}
\subsection{The $\Knapsack$ problem}
Our focus in this paper is on the
$\Knapsack$ problem. In the $\Knapsack$ problem,  we are given a set of $n$ objects $V=[n]$
with sizes $c_1, c_2, \ldots c_n$,  values $v_1, v_2, \ldots v_n$, and
a capacity $C$.  We assume that for every $i$, $c_i\leq C$.
The objective is to select a subset of objects of maximum total value
such that the total size of the objects selected does not exceed $C$.
%For convenience, we will define the \emph{efficiency} of an object as the 
%ratio between its value and it size.\footnote{Some refer to this ratio as
%the \emph{density} of the object.}

The standard linear programming (LP) relaxation~\cite{KPP} for 
$\Knapsack$ is given by:
\begin{align}
\lp{\max}{\sum_{i \in V} v_ix_i}{
		\begin{aligned}
			&\sum_{i \in V}c_ix_i \leq C & \\
			&0 \leq x_i \leq 1 & \ \ \ \forall i \in V
		\end{aligned}
	}
	\label{eq:knapsack-lp}
\end{align}
The intended intepretation of an integral solution of this LP
is obvious: $x_i = 1$ means the object $i$ is selected, and $x_i = 0$ 
means it is not. The constraint can be written as $g(x)=C-\sum_i c_ix_i\geq 0$.

Let {\em Greedy} denote the algorithm that puts objects in the knapsack by order of decreasing ratio $v_i/c_i$, stopping as soon as the next object would exceed the capacity.
The following lemma is folklore.

\begin{lemma}\label{ref:known}
  Consider an instance $(C,V)$ of $\Knapsack$ and its LP relaxation $K$ given by (\ref{eq:knapsack-lp}).  Then
  $$\text{Value}(K)\leq\text{Value(Greedy}(C,V))+\max_{i\in V}v_i.$$
\end{lemma}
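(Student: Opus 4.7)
The plan is to characterize an optimal LP solution explicitly and then compare it, term by term, to what Greedy picks. Sort the objects so that $v_1/c_1 \geq v_2/c_2 \geq \cdots \geq v_n/c_n$, and let $k$ be the largest index such that $c_1 + c_2 + \cdots + c_k \leq C$. By definition, $\text{Value(Greedy}(C,V)) = v_1 + v_2 + \cdots + v_k$.

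The key claim is that the LP \eqref{eq:knapsack-lp} has an optimal solution of the form $x_1 = x_2 = \cdots = x_k = 1$, $x_{k+1} = \alpha$ for some $\alpha \in [0,1)$, and $x_{k+2} = \cdots = x_n = 0$. I would prove this by a standard exchange argument: take any optimal $x^\ast$; if there exist indices $i < j$ with $x_i^\ast < 1$ and $x_j^\ast > 0$, then shifting a small amount of mass from $j$ to $i$ (rescaled by the ratio $c_j/c_i$ so as to preserve the capacity constraint) strictly increases the objective, or at worst preserves it, because $v_i/c_i \geq v_j/c_j$. Iterating (or invoking the fact that a linear program has a vertex optimum and analyzing which vertex can be optimal) yields a solution that fills a prefix of the sorted list as much as possible, which is exactly the form above; the defining property of $k$ forces the prefix to be $\{1,\dots,k\}$ plus a fractional use of item $k+1$.

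Given this characterization, $\text{Value}(K) = v_1 + v_2 + \cdots + v_k + \alpha v_{k+1}$, so
\[
\text{Value}(K) - \text{Value(Greedy}(C,V)) = \alpha v_{k+1} \leq v_{k+1} \leq \max_{i \in V} v_i,
\]
which is the desired inequality.

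The only nontrivial step is the exchange argument establishing the structure of the LP optimum; everything else is bookkeeping. A minor subtlety is handling the boundary cases (when $k = n$, so Greedy packs everything and the LP equals the sum of all $v_i$; or when the tight item $k+1$ does not exist), but both reduce the statement to a trivial inequality.
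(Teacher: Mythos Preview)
Your proof is correct and is the standard argument for this folklore bound. The paper itself does not supply a proof of this lemma; it simply labels it as folklore and moves on, so there is nothing to compare against beyond noting that your argument is exactly the classical one (essentially Dantzig's characterization of the fractional knapsack optimum). One tiny remark: the paper's Greedy is defined to stop as soon as the next object in ratio order would overflow, so your identification of $k$ matches the paper's definition verbatim, and no further justification is needed there.
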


\subsection{The Sherali-Adams and Lasserre hierarchies}

We next review the lift-and-project hierarchies that we will use in this paper.
The descriptions we give here  assume that the base program is linear and mostly use
the notation given in the survey paper by Laurent ~\cite{L01}.
To see that these hierarchies apply at a much greater level of generality we refer
the reader to Laurent's paper ~\cite{L01}.

Let $K$ be a polytope defined by a set of linear constraints 
$g_1, g_2, \ldots g_m$:
\begin{equation}
\label{eq:defK}
K = \set{x \in [0,1]^n \st{g_\ell(x) \geq 0\ \text{for}\ \ell = 1,2,\ldots m}}.
\end{equation}
We are interested in optimizing a linear objective function $f$ over 
the convex hull 
$P = \conv{K \intersect\set{0,1}^n}$
of integral points in $K$. 
Here, $P$ is the set of convex combinations of all integral solutions of the
given combinatorial problem and $K$ is the set of solutions to its linear relaxation.
For example, if $K$ is defined by~\eqref{eq:knapsack-lp}, then $P$ is the
set of convex combinations of valid integer solutions to $\Knapsack$.

If all vertices of $K$ are integral then $P = K$ and we are done.
Otherwise, we would like to strengthen the relaxation $K$ by adding 
additional valid constraints.
The Sherali-Adams (SA) and Lasserre hierarchies are two 
different systematic ways to construct these
additional constraints.
In the SA hierarchy, all the constraints added are linear, whereas
Lasserre's hierarchy is stronger and introduces a set of positive
semi-definite constraints. 
However, for consistency, we will describe both hierarchies as 
requiring certain submatrices to be positive semi-definite (readers who are
not familiar with the following formulation of SA are referred to 
Appendix~\ref{a:sa} for a linear formulation of the hierarchy.)

To this end, we first state some notation.
Throughout this paper we will use $\powerset{V}$ to denote the power set of $V$,
and $\powerset[t]{V}$ to denote  the collection of
all subsets of $V$ whose sizes are at most $t$.
Also, given  two sets of coordinates $T$ and $S$, $T \subseteq S$ and $y \in R^{S}$,
by $\project{y}{T}$ we denote the projection of $y$ onto $T$.

Next, we review the definition of the {\em shift operator} 
between two vectors 
$x,y \in R^{\powerset{V}}$: $x*y$ is a vector in 
$R^{\powerset{V}}$ such that
\begin{align*}
	(x*y)_I = \sum_{J \subseteq V} x_Jy_{I\union J}.
\end{align*}
\begin{lemma}[~\cite{L01} ]The shift operator is commutative:
for any vectors $x,y,z \in R^{\powerset{V}}$, we have $x*(y*z) = y*(x*z)$.
\label{lemma:commutative}
\end{lemma}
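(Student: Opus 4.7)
The plan is to prove commutativity by direct double-sum expansion and appeal to the commutativity of set union. The shift operator is built out of two symmetric ingredients — pointwise multiplication of scalar entries and the union operation on index sets — so I expect the identity to drop out essentially for free once both sides are written as a sum over pairs of subsets.

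Concretely, I would fix an arbitrary $I \subseteq V$ and compute the $I$-th coordinate of each side. Unfolding the outer shift in $x * (y * z)$ gives
\begin{align*}
	(x * (y * z))_I
	&= \sum_{J \subseteq V} x_J \, (y * z)_{I \cup J} \\
	&= \sum_{J \subseteq V} x_J \sum_{K \subseteq V} y_K \, z_{(I \cup J) \cup K} \\
	&= \sum_{J, K \subseteq V} x_J \, y_K \, z_{I \cup J \cup K}.
\end{align*}
Performing the same unfolding on $y * (x * z)$ yields
\begin{align*}
	(y * (x * z))_I
	= \sum_{K, J \subseteq V} y_K \, x_J \, z_{I \cup K \cup J}.
\end{align*}
The two right-hand sides are equal term by term: scalar multiplication in $\mathbb{R}$ is commutative, and $I \cup J \cup K = I \cup K \cup J$ since union of subsets of $V$ is commutative and associative. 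Since this holds for every $I$, the two vectors agree.

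There is no real obstacle here; the only thing to watch is keeping the index sets clearly labeled so that the associativity/commutativity of union is applied transparently. The lemma essentially says that $*$ is the convolution-like operation dual to the union monoid on $\powerset{V}$, and it inherits commutativity from the underlying commutative monoid. No extra hypotheses on $x, y, z$ (e.g., nonnegativity) are needed.
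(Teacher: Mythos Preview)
Your proof is correct: the direct double-sum expansion reduces both sides to $\sum_{J,K\subseteq V} x_J\,y_K\,z_{I\cup J\cup K}$, and commutativity of scalar multiplication together with commutativity/associativity of set union finishes it. The paper does not actually supply a proof of this lemma at all---it simply states the result with a citation to Laurent~\cite{L01}---so there is nothing to compare your argument against. Your computation is exactly the standard one and requires no further justification.
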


A polynomial $P(x) = \sum_{I \subseteq V} a_I\prod_{i\in I}x_i$ 
can also be viewed as a vector indexed by subsets of $V$. We define the vector
$P*y$ accordingly:
$(P*y)_I = \sum_{J \subseteq V} a_Jy_{I \union J}.$

Finally, let $\cal T$ 
be a collection of subsets of $V$ and $y$ be a vector in $R^{\cal T}$. 
We denote by $M_{\cal T}(y)$ the matrix whose rows and colums
are indexed by elements of $\cal T$ such that 
\begin{align*}
	\left(M_{\cal T}(y)\right)_{I,J} = y_{I \union J}.
\end{align*}

The main observation is that if $x \in {K \intersect\set{0,1}^n}$ then 
$(y_I )= (\prod_{i \in I}x_i)$ satisfies
$M_{\powerset{V}}(y) = yy^T \isPSD$ and 
$M_{\powerset{V}}(g_\ell* y) = g_\ell(x)yy^T \isPSD$ for all constraints
$g_\ell$. Thus requiring principal submatrices of these two matrices
to be positive semi-definite yields a relaxation. 

\begin{definition}
	\label{def:SA-psd}
  For any $1 \leq t \leq n$, the \emph{$t$-th Sherali-Adams lifted polytope}
  $\SA{t}{K}$ is the set of vectors $y \in [0,1]^{\powerset[t]{V}}$ such
	that $y_\emptyset = 1$, $M_{\powerset{U}}(y) \isPSD$ and 
	$M_{\powerset{W}}(g_\ell*y) \isPSD$
	for all $\ell$ and subsets $U,W \subseteq V$ such that
	$|U| \leq t$ and $|W| \leq t-1$. 
	
	We say that a point $x \in [0,1]^n$ belongs to the 
	\emph{$t$-th Sherali-Adams
  polytope $\sa{t}{K}$} iff there exists a $y \in \SA{t}{K}$ such that
  $y_\set{i} = x_i$ for all $i \in [n]$.
\end{definition}

\begin{definition}
	\label{def:La}
	For any $1 \leq t \leq n$, the \emph{$t$-th Lasserre lifted polytope}
  $\La{t}{K}$ is the set of vectors $y \in [0,1]^{\powerset[2t]{V}}$ such
	that $y_\emptyset = 1$, $M_{\powerset[t]{V}}(y) \isPSD$ and 
	$M_{\powerset[t-1]{V}}(g_\ell*y) \isPSD$ for all $\ell$.
	
	We say that a point $x \in [0,1]^n$ belongs to the \emph{$t$-th Lasserre
  polytope $\la{t}{K}$} if there exists a $y \in \La{t}{K}$ such that
  $y_\set{i} = x_i$ for all $i \in V$.
\end{definition}

Note that $M_{\powerset{U}}(y)$ has at most $2^t$ rows and columns, which is constant for $t$ constant, whereas $M_{\powerset[t]{V}}(y)$ has ${n+1\choose t+1}$ rows and columns.

It is immediate from the definitions that $\sa{t+1}{K} \subseteq \sa{t}{K}$,
and $\la{t+1}{K} \subseteq \la{t}{K}$ for all $1 \leq t \leq n-1$.
%In addition, $\la{t}{K} \subseteq \sa{t}{K}$.
Sherali and Adams~\cite{SA90} show that $\sa{n}{K} = P$,
and Lasserre~\cite{Lass1,Lass2} show that $\la{n}{K} = P$.
Thus, the sequences
\begin{align*}
	K \supseteq \sa{1}{K} \supseteq \sa{2}{K} \supseteq \cdots 
	\supseteq \sa{n}{K} = P \\
	K \supseteq \la{1}{K} \supseteq \la{2}{K} \supseteq \cdots 
	\supseteq \la{n}{K} = P 
\end{align*}
define hierarchies of polytopes that converge to $P$.
Furthermore, the Lasserre hierarchy is stronger than the Sherali-Adams
hierarchy: $\la{n}{K} \subseteq \sa{n}{K}$. 
In this paper, we show that for the $\Knapsack$ problem,
the Lasserre hierarchy is strictly stronger.

\comment{
\subsection{The integrality gap of the hierarchies}

We will be studying the integrality gaps of the lifted polytopes for 
$\Knapsack$. For a given $\Knapsack$ instance
$K$, a level $t$ and a particular hierarchy $H$ (where $H$ is either 
Sherali-Adams or Lasserre),
define $H_t(K)$ to be the ratio between the optimal solution in the 
$t$-th lifted
polytope (the maximum value of $\sum_i v_i y_\set{i}$, for
$y$  in the $t$-th lifted polytope on instance $K$) 
and the optimal value of the objective function for a feasible 
integral solution.
The {\em integrality gap at the $t$-th level of the hierarchy $H$} is
$\sup_{K} H_t ({K})$.
}

\section{Lower bound for the Sherali-Adams hierarchy for $\name{Knapsack}$}
In this section, we show that the integrality gap of the $t$-th level
of the Sherali-Adams hierarchy for $\Knapsack$ is close to 
$2$.  This lower bound even holds for the 
\emph{uniform} $\Knapsack$ problem, in which $v_i=c_i = 1$ for all $i$
\footnote{Some people call this problem \emph{Unweighted Knapsack} or
\emph{Subset Sum.}}.

\begin{theorem}
\label{thm:sa-lowerbound}
For every $\epsilon, \delta > 0$,
the integrality gap at the $t$-th level of the 
Sherali-Adams hierarchy for $\Knapsack$ where $t \leq \delta n$ is at least 
$(2 - \epsilon)(1/(1+\delta))$.
\end{theorem}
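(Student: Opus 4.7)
My plan is to exhibit, for any $\epsilon > 0$ and $\delta \in (0,1)$, a uniform $\Knapsack$ instance with $n$ large and a vector $y \in \SA{t}{K}$ at level $t \leq \delta n$ whose objective value is at least $(2-\epsilon)/(1+\delta)$; the case $\delta \geq 1$ is trivial since then $(2-\epsilon)/(1+\delta) \leq 1$. The instance will be $n$ items with $v_i = c_i = 1$ and capacity $C := 2 - \epsilon'$, where $\epsilon' \in (0,1)$ is a small parameter to be chosen; the integer optimum is then $\lfloor C \rfloor = 1$. For the lifted variables I would try the maximally symmetric low-weight ansatz
\[
y_\emptyset = 1, \qquad y_{\{i\}} = p, \qquad y_I = 0 \text{ for all } |I| \geq 2,
\]
with $p \in [0,1]$ a single free parameter, so that the lifted objective becomes $\sum_i v_i y_{\{i\}} = np$. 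The whole problem then reduces to choosing $p$ as close to $2/((1+\delta)n)$ as feasibility in $\SA{t}{K}$ allows.

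The reason this ansatz is convenient is that forcing $y_I = 0$ for $|I| \geq 2$ collapses every relevant moment matrix onto a block structure: every entry $M_{I,J}$ of $M_{\powerset{U}}(y)$ or of $M_{\powerset{W}}(g*y)$ with $|I \cup J| \geq 2$ automatically vanishes, so up to zero rows and columns these matrices reduce to the principal submatrix indexed by $\emptyset$ together with the singletons of $U$ (respectively $W$). For $M_{\powerset{U}}(y)$, that submatrix is $\left(\begin{smallmatrix} 1 & p\mathbf{1}^T \\ p\mathbf{1} & pI_{|U|} \end{smallmatrix}\right)$, which by a Schur-complement computation is PSD iff $p \leq 1/|U|$; the binding case over $|U| \leq t$ is $p \leq 1/t$. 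Direct computation of the shift yields $(g*y)_\emptyset = C - np$, $(g*y)_{\{i\}} = (C-1)p$, and $(g*y)_I = 0$ for $|I| \geq 2$, so the singleton block of $M_{\powerset{W}}(g*y)$ is $\left(\begin{smallmatrix} C - np & (C-1)p\mathbf{1}^T \\ (C-1)p\mathbf{1} & (C-1)pI_{|W|} \end{smallmatrix}\right)$; its PSD-ness is equivalent to $C - np \geq (C-1)p\,|W|$, and the binding case over $|W| \leq t-1$ is $p \leq C/(n + (C-1)(t-1))$.

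To close the argument I would set $p := C/(n + (C-1)(t-1))$. A short calculation shows that for $n \geq 1/(1-\delta)$ this value also satisfies $p \leq 1/t$, so both PSD conditions hold and $y \in \SA{t}{K}$. The corresponding objective value is $np = nC/(n + (C-1)(t-1))$, which tends to $C/(1 + (C-1)\delta)$ as $n \to \infty$ and then to $2/(1+\delta)$ as $\epsilon' \to 0$. Given $\epsilon > 0$, I first choose $\epsilon'$ small enough and then $n$ large enough that $np \geq (2-\epsilon)/(1+\delta)$; dividing by the integer optimum $1$ yields the claimed integrality gap.

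The main (really rather small) obstacle is to resist the natural urge to model $y$ as the moments of a symmetric probability distribution, e.g.\ the uniform random $s$-subset, since any genuinely distributional construction in which $\mu$ is supported on globally feasible integer points produces gap at most $C/\lfloor C \rfloor$, which is bounded by $2$ and is much smaller when $\lfloor C \rfloor \geq 2$. The key move is to exploit the locality of level-$t$ Sherali--Adams by taking $y_I = 0$ for $|I| \geq 2$: this is not the marginal of any globally feasible probability distribution, but it is perfectly consistent with the PSD conditions at level $t$, and it is exactly what pushes the gap arbitrarily close to $2$ for small $\delta$.
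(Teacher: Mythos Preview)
Your proposal is correct and follows essentially the same approach as the paper: the same uniform $\Knapsack$ instance with $C$ just below $2$, the same ansatz $y_\emptyset=1$, $y_{\{i\}}=p$, $y_I=0$ for $|I|\geq 2$, and the same reduction of the PSD conditions to the singleton block via Schur complements. Your choice $p=C/(n+(C-1)(t-1))$ is in fact the tight value implied by the $M_{\powerset{W}}(g*y)$ constraint, matching the paper's $\alpha$ up to the parameterization of $C$, and your side verification that $p\leq 1/t$ for $n\geq 1/(1-\delta)$ is a clean way to dispose of the $M_{\powerset{U}}(y)$ condition.
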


\begin{proof} (Sketch - for full proof see Appendix~\ref{appendix:SAproof}.)
	Consider the instance $K$ of $\Knapsack$ with $n$ objects where
	$c_i = v_i = 1$ for all $i \in V$ and capacity $C = 2(1-\epsilon)$.
	Then the optimal integer value is 1. 
	On the other hand, we claim that the vector $y$ where
	$y_\emptyset = 1$, $y_\set{i} = C/(n+(t-1)(1-\epsilon))$ 
	and $y_I = 0$ for all $|I| > 1$
	is in $\SA{t}{K}$.
	Thus, the integrality gap of the $t$th round of Sherali-Adams is at least
	$Cn/(n+(t-1)(1-\epsilon))$, which is at least $(2-\epsilon)(1/(1+\delta))$
	when $t \leq \delta n.$
\end{proof}

\section{A decomposition theorem for the Lasserre hierarchy}
In this section, we develop the machinery we will need for our Lasserre upper bounds. It turns out that it is more convenient to work with families $(z^X)$ of characteristic vectors rather than directly with $y$. We begin 
with some definitions and basic properties.

\begin{definition}[extension]
\label{def:y'}
Let ${\cal T}$ be a collection of subsets of $V$
and let $y$ be a vector indexed by sets of $\cal T$. We define the \emph{extension} of $y$ 
to be the vector $y'$, indexed by all subsets of $V$,  such that $y'_I$ equals $y_I$ if $I\in\cal T$ and equals $0$ otherwise.
%\begin{equation*}
%  y'_I = \left\{\begin{aligned}
%            &y_I &\text{if}\ I \in {\cal T}\\
%            & 0  &\text{otherwise}
 %        \end{aligned}\right.
%\end{equation*}
\end{definition}

\begin{definition}[characteristic polynomial]
\label{def:zX}
Let $S$ be a subset of $V$ and $X$ a subset of $S$.
We define the \emph{characteristic polynomial} $P^X$ of
$X$ with respect to $S$ as
$$  P^X(x) = \prod_{i \in X}x_i\prod_{j \in S\backslash X}(1-x_j) =\sum_{J: X\subseteq J\subseteq S}
(-1)^{|J\setminus X|} \prod_{i\in J} x_i.
$$
\end{definition}

\begin{lemma}[inversion formula]
  \label{lem:decomp1}
Let $y'$ be a vector indexed by all subsets of $V$.  Let $S$ be a subset of $V$ and, for each $X$ subset of $S$, let $z^X=P^X * y' $:
$$z^X_I=\sum_{J: X\subseteq J\subseteq S} (-1)^{|J\setminus X|} y'_{I\cup J}.$$ Then
  $y' = \sum_{X \subseteq S}z^X$.%, and $\sum_{X \subseteq S}z^X_{\emptyset} = 1.$
\end{lemma}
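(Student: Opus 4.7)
The plan is to reduce the identity $y' = \sum_{X \subseteq S} z^X$ to the polynomial identity $\sum_{X \subseteq S} P^X(x) \equiv 1$, via linearity of the shift operator.

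First I would verify the polynomial identity. For each $i \in S$ we have the trivial equality $x_i + (1-x_i) = 1$, so
\[
\prod_{i \in S}\bigl(x_i + (1-x_i)\bigr) = 1.
\]
Expanding the product on the left by choosing, for each $i \in S$, either the term $x_i$ (putting $i$ into a set $X$) or the term $(1-x_i)$ (leaving $i$ out of $X$), one obtains exactly $\sum_{X \subseteq S} \prod_{i \in X} x_i \prod_{j \in S \setminus X}(1-x_j) = \sum_{X \subseteq S} P^X(x)$. Hence $\sum_{X \subseteq S} P^X(x) \equiv 1$ as polynomials.

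Next I would record that the shift operator is linear in its polynomial argument: if $P(x) = \sum_J a_J \prod_{i \in J} x_i$ and $Q(x) = \sum_J b_J \prod_{i \in J} x_i$, then $(P+Q) * y' = P * y' + Q * y'$ directly from the definition $(P*y')_I = \sum_J a_J y'_{I \cup J}$. Likewise, for the constant polynomial $1$, the only nonzero coefficient is $a_\emptyset = 1$, so $(1 * y')_I = y'_{I \cup \emptyset} = y'_I$, i.e.\ $1 * y' = y'$.

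Combining these two ingredients gives the result in one line:
\[
\sum_{X \subseteq S} z^X \;=\; \sum_{X \subseteq S} (P^X * y') \;=\; \Bigl(\sum_{X \subseteq S} P^X\Bigr) * y' \;=\; 1 * y' \;=\; y'.
\]
There is no real obstacle here; the only thing to watch is to make sure the coefficient expansion in Definition~\ref{def:zX} is applied consistently (the sign $(-1)^{|J \setminus X|}$ on the monomial $\prod_{i \in J} x_i$ in $P^X$ matches exactly the coefficient appearing in the stated formula for $z^X_I$), so that the linearity step is genuinely term-by-term and no bookkeeping errors creep in.
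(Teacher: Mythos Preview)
Your proof is correct. It differs in presentation from the paper's argument: the paper works directly at the level of coordinates, substituting the formula for $z^X_I$ into $\sum_{X\subseteq S} z^X_I$, reindexing the double sum over pairs $X\subseteq J\subseteq S$ as a sum over pairs $B\subseteq A\subseteq S$ (with $A=J$, $B=J\setminus X$), and then invoking the standard cancellation $\sum_{B\subseteq A}(-1)^{|B|}=0$ for $A\neq\emptyset$. Your route instead lifts the cancellation to the polynomial level via the partition-of-unity identity $\sum_{X\subseteq S}P^X=\prod_{i\in S}(x_i+(1-x_i))=1$, and then pushes it through the shift operator by linearity. The underlying combinatorics is the same, but your framing is more conceptual and makes the reason for the identity transparent; the paper's version is a bare-hands computation that avoids introducing the auxiliary polynomial identity.
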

\begin{proof} 
Fix a subset $I$ of $V$. Substituting the definition of $z^X_I$ in $\sum_{X \subseteq S}z^X_I $, and changing the index of summation, we get
$$\sum_{X \subseteq S}z^X_I = \sum_{A\subseteq S}\sum_{J\subseteq A} (-1)^{|J|}~y'_{I\cup A}.$$
For $A\neq\emptyset$ the inner sum is 0, so only the term for $A=\emptyset$, which equals $y'_I$, remains. 
%Finally, $y'_{\emptyset} = \sum_{X  \subseteq S} z^X_{\emptyset}$
%and $y'_{\emptyset} = y_{\emptyset} = 1$.
\end{proof}

\begin{lemma}
  \label{lem:w}
Let $y'$ be a vector indexed by all subsets of $V$, $S$ be a subset of $V$ and $X$ be a subset of $S$. Then
\begin{equation*}
 \left\{\begin{aligned}
           &z^X_I = z^X_{I \backslash X}&\text{for all }I\\
             &z^X_I=z^X_{\emptyset} &\text{if}\ I \subseteq X\\
            &z^X_I= 0  &\text{if $I \cap (S\setminus X) \ne \emptyset$}
         \end{aligned}\right.
\end{equation*}
\end{lemma}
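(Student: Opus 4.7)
The plan is to work directly from the explicit formula
$$z^X_I = \sum_{J: X \subseteq J \subseteq S} (-1)^{|J \setminus X|} y'_{I \cup J}$$
supplied by the inversion formula lemma, and establish the three identities essentially in the order (1), (2), (3), using (1) to derive (2) for free.

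For the first identity, I would observe that every index $J$ appearing in the sum satisfies $X \subseteq J$, so in particular $J \supseteq I \cap X$. Therefore $I \cup J = (I \setminus X) \cup J$ for every such $J$, and the sums defining $z^X_I$ and $z^X_{I \setminus X}$ are term-by-term identical. The second identity then follows immediately: if $I \subseteq X$, then $I \setminus X = \emptyset$, so (1) gives $z^X_I = z^X_\emptyset$.

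The third identity is the least mechanical and will be the main (though still modest) step. Assuming $I \cap (S \setminus X) \neq \emptyset$, I would pick some fixed element $i \in I \cap (S \setminus X)$ and define the involution $\sigma : J \mapsto J \triangle \{i\}$ on the index set $\{J : X \subseteq J \subseteq S\}$. Since $i \in S$, $\sigma(J)$ is still a subset of $S$; since $i \notin X$, we still have $X \subseteq \sigma(J)$; so $\sigma$ is a well-defined fixed-point-free involution on the index set. Moreover $|\sigma(J) \setminus X|$ differs from $|J \setminus X|$ by exactly one, so the signs $(-1)^{|J \setminus X|}$ at paired indices are opposite. Finally, since $i \in I$, we have $I \cup J = I \cup \sigma(J)$, so the paired terms $(-1)^{|J \setminus X|} y'_{I \cup J}$ cancel, forcing $z^X_I = 0$.

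The only thing to be careful about is checking that the pairing is genuinely an involution on the correct index set, i.e.\ that adding or removing $i$ preserves both $X \subseteq J$ and $J \subseteq S$; this is exactly why the choice $i \in I \cap (S \setminus X)$, rather than any $i \in I$, is required. No further subtlety is expected: the three claims are formal consequences of the definition of $z^X$ together with the combinatorial identity $I \cup J = I \cup (J \triangle \{i\})$ whenever $i \in I$.
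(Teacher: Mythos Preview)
Your proposal is correct and follows essentially the same route as the paper's own proof: the paper also derives (1) by observing that $X\subseteq J$ forces $I\cup J=(I\setminus X)\cup J$, deduces (2) immediately, and proves (3) by pairing each $J$ with $J\cup\{i\}$ for a fixed $i\in I\cap(S\setminus X)$ so that the terms cancel. The only cosmetic difference is that you phrase the pairing as the involution $J\mapsto J\triangle\{i\}$, which is the same idea.
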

\begin{proof}
Let $I' = I \setminus X$ and $I'' = I \intersect X$. Using the definition of $z^X_I$ and noticing that $X\cup I''=X$ yields $z^X_I=z^X_{I'}$. This immediately implies that for $I\subseteq X$, $z^X_I=z^X_{\emptyset}$. 

Finally, consider a set $I$ that intersects $S\setminus X$ and let  $i \in I \intersect (S\backslash X)$. In the definition of $z^X_I$, we group the terms of the sum into pairs consisting of $J$ such that $i\notin J$ and of $J\cup \{ i\}$. Since $I=I\cup \{ i\}$, we obtain:
$$	\sum_{J: X\subseteq J\subseteq S} (-1)^{|J\setminus X|} y'_{I\cup J}=
	\sum_{J: X\subseteq J\subseteq S\setminus \{ i\} }
	\left(      (-1)^{|J\setminus X|}+(-1)^{|J\setminus X|+1} \right )  y'_{I\cup J} =0.$$
\end{proof}

\begin{corollary}
  \label{lem:w-integer}
Let $y'$ be a vector indexed by all subsets of $V$, $S$ be a subset of $V$ and $X$ be a subset of $S$. Let $w^X$ be defined as $z^X/z^X_{\emptyset}$ if $z^X_\emptyset\neq 0$ and defined as 0 otherwise. Then, if  $z^X_\emptyset\neq 0$, then $w^X_{\{ i\} }$ equals 1 for elements of $X$ and 0 for elements of $S\setminus X$.
\end{corollary}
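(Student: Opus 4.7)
The plan is to derive the corollary directly from Lemma~\ref{lem:w} by considering the singleton $I = \{i\}$ in two cases.

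First, suppose $i \in X$. Then $\{i\} \subseteq X$, so the second case of Lemma~\ref{lem:w} applies and gives $z^X_{\{i\}} = z^X_{\emptyset}$. Dividing through by $z^X_\emptyset \neq 0$ yields $w^X_{\{i\}} = 1$, as required.

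Second, suppose $i \in S \setminus X$. Then $\{i\} \cap (S\setminus X) = \{i\} \neq \emptyset$, so the third case of Lemma~\ref{lem:w} applies and gives $z^X_{\{i\}} = 0$. Dividing through by $z^X_\emptyset \neq 0$ yields $w^X_{\{i\}} = 0$, as required.

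No real obstacle is anticipated: the corollary is immediate once one matches the two relevant cases of Lemma~\ref{lem:w} to the singleton indices. The only thing to be mildly careful about is that we rely on the hypothesis $z^X_\emptyset \neq 0$ precisely so that the normalization $w^X = z^X/z^X_\emptyset$ is well-defined; without it, the conclusion is vacuous by the definition $w^X = 0$.
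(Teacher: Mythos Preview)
Your proof is correct and matches the paper's intended approach: the paper states this as an immediate corollary of Lemma~\ref{lem:w} without writing out the argument, and your two-case application of that lemma to the singleton $I=\{i\}$ is exactly the implicit reasoning.
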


\medskip

\begin{definition}[closed under shifting]
Let $S$ be an arbitrary subset of $V$ and $\cal T$ be a collection of subsets of $V$. 
We say that $\cal T$ is \emph{closed under shifting by S} if
% for any subset $X \subseteq S$ and any subset $Y \in {\cal T}$, we have $X \union Y \in {\cal T}$.
$$Y\in{\cal T}\implies \forall X\subseteq S,~~X\cup Y\in {\cal T}.$$
\end{definition}
The following lemma generalizes Lemma 5 in
\cite{L01}. It proves that the positive-semidefinite property carries over from $y$ to $(z^X)$.

\begin{lemma}
  \label{lem:star-psd}
Let $S$ be an arbitrary subset of $V$ and $\cal T$ be a collection of subsets of $V$ that is closed under shifting by $S$. Let $y$ be a vector indexed by sets of $\cal T$. Then
$$M_{\cal T}(y) \isPSD \implies \forall X\subseteq S,~~ M_{\cal T}(z^X) \isPSD .$$
 \end{lemma}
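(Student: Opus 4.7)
The plan is to prove $M_{\cal T}(z^X) \isPSD$ by exhibiting, for each test vector $v \in \mathbb{R}^{\cal T}$, a companion vector $w \in \mathbb{R}^{\cal T}$ such that $v^T M_{\cal T}(z^X) v = w^T M_{\cal T}(y) w$; the PSD hypothesis on $M_{\cal T}(y)$ then finishes the job. The natural choice is for $w$ to be the coefficient vector of the polynomial $P^X \cdot p$, where $p(x) = \sum_{I \in \cal T} v_I \prod_{i \in I} x_i$ is the polynomial whose coefficient vector is $v$.

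Concretely, I would first spell out the standard moment-functional reading of these matrices. Define, for any polynomial $r(x) = \sum_L r_L \prod_{i \in L} x_i$, the value $\mathbb{E}_{y'}[r] = \sum_L r_L y'_L$. Working modulo the $\{0,1\}$-ideal generated by $x_i^2 - x_i$ for $i \in V$ (so that $x^I x^J$ collapses to $x^{I \cup J}$), a direct expansion from the definitions of $M_{\cal T}$ and of the shift operator gives the two identities
\[
v^T M_{\cal T}(y) v \;=\; \mathbb{E}_{y'}[p^2], \qquad v^T M_{\cal T}(z^X) v \;=\; \mathbb{E}_{y'}[P^X \cdot p^2].
\]

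The key algebraic observation is that the characteristic polynomial is idempotent on $\{0,1\}^V$: since $x_i^2 \equiv x_i$ and $(1-x_j)^2 \equiv 1 - x_j$ modulo the ideal, the definition $P^X = \prod_{i \in X} x_i \prod_{j \in S \setminus X}(1 - x_j)$ yields $(P^X)^2 \equiv P^X$. Hence $P^X \cdot p^2 \equiv (P^X)^2 \cdot p^2 \equiv (P^X p)^2$, so letting $q = P^X p$ (reduced modulo the ideal) and $w$ be its coefficient vector gives
\[
v^T M_{\cal T}(z^X) v \;=\; \mathbb{E}_{y'}[(P^X p)^2] \;=\; \mathbb{E}_{y'}[q^2] \;=\; w^T M_{\cal T}(y) w.
\]

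The step at which the closure hypothesis enters\,---\,and the only genuinely delicate point in the argument\,---\,is verifying that $w$ is supported on $\cal T$ and not on some larger collection of subsets. Expanding $P^X = \sum_{K: X \subseteq K \subseteq S} (-1)^{|K \setminus X|} x^K$ and multiplying by $p$, each monomial appearing in $q$ has index $I \cup K$ for some $I \in \cal T$ and some $K$ with $X \subseteq K \subseteq S$; since $\cal T$ is closed under shifting by $S$ and $K \subseteq S$, such an $I \cup K$ lies in $\cal T$. Thus $w \in \mathbb{R}^{\cal T}$, the last equality in the display above is legitimate, and $w^T M_{\cal T}(y) w \geq 0$ by hypothesis, completing the proof.
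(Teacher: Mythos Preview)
Your proof is correct and rests on the same key idea as the paper's: the idempotence $(P^X)^2 \equiv P^X$ modulo the multilinear ideal, together with the closure hypothesis to keep all monomials inside ${\cal T}$. The only difference is presentational. The paper argues on the Gram side: from a factorization $(M_{\cal T}(y))_{I,J}=\langle v_I,v_J\rangle$ it sets $w_I=\sum_{H\subseteq S\setminus X}(-1)^{|H|}v_{I\cup X\cup H}$ and computes $\langle w_I,w_J\rangle=(z^X)_{I\cup J}$, the cancellation in that double sum being exactly the Gram-side manifestation of $(P^X)^2\equiv P^X$. You argue on the quadratic-form (pseudo-expectation) side, pulling back each test vector $v$ to $w$ via $p\mapsto P^X p$. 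These are dual descriptions of the same linear map, so neither approach buys anything the other does not; your phrasing is closer to modern sum-of-squares language, while the paper's is the classical SDP formulation.
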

\begin{proof}
	Since $M_{\cal T}(y) \isPSD$, there exist vectors $v_{I}$, 
	$I \in {\cal T}$, such that $\dotprod{v_{I}}{v_{J}} = y_{I \union J}$. Fix a subset $X$ of $S$.
	For each $I \in {\cal T}$, let
	\begin{equation*}
		w_I = \sum_{H \subseteq S\setminus X}(-1)^{|H|} v_{I \union X \union H},
	\end{equation*}
	which is well-defined since $\cal T$ is closed under shifting by $S$.

	Let $I,J\in \cal T$. It is easy to check  that $\dotprod{w_I}{w_J} = (z^X)_{I \union J}$. Indeed,
	\begin{align}
			\dotprod{w_I}{w_J} 
%				&= \dotprod{\sum_{H \subseteq S\setminus X}(-1)^{|H|}v_{I \union X \union H}}
%									 {\sum_{L \subseteq S\setminus X}(-1)^{|L|}v_{J \union X \union L}}\\
				&= \sum_{H \subseteq S\setminus X}\sum_{L \subseteq S\setminus X} (-1)^{|H| + |L|}
							\dotprod{v_{I\union X \union H}}
											{v_{J\union X \union L}}\\
				&= \sum_{H \subseteq S\setminus X}\sum_{\L \subseteq S\setminus X} (-1)^{|H| + |L|}
							y_{I \union J \union X \union H \union L}
							\label{eq:shift}
	\end{align}
	by definition of $v_I,v_J$ and since $\cal T$ is closed under shifting by $S$ (so that this is well-defined). Consider a non-empty subset $H$ of $S\setminus X$ and let $i\in H$. We group the terms of the inner sum into pairs consisting of $L$ such that $i\notin L$ and of $L\cup \{ i\}$. Since $H=H\cup \{ i\}$, we obtain:
$$		\sum_{L \subseteq S\setminus X}(-1)^{|H| + |L|}
				y_{I \union J \union X \union H \union L}
 = \sum_{L \subseteq (S\setminus X\backslash\set{i})} \left(
						(-1)^{|H| + |L|}
						+ (-1)^{|H| + |L| + 1}
					\right)
						y_{I \union J \union X \union H \union L}=0.$$
	Thus, the expression in~\eqref{eq:shift} becomes
$$\dotprod{w_I}{w_J} =			\sum_{L\subseteq S\setminus X} (-1)^{|L|}
				y_{I \union J \union X \union L}
			 = (z^X)_{I \union J}.
$$
This implies that $M_{\cal T}(z^X) \isPSD$.
\end{proof}

In the rest of the  section, we prove a decomposition theorem for the Lasserre hierarchy,
which allows us to ``divide'' the action of the hierarchy and think of it as
using the first few rounds on some subset of variables, and the other rounds
on the rest.
We will use this theorem to prove that the Lasserre hierarchy 
closes the gap for the $\Knapsack$ problem in the next section.

\begin{theorem}
  \label{thm:la-decomp}
Let $t>1$ and $y \in \La{t}{K}$. Let $k<t$ and $S$ be a subset of $V$ and  such that 
\begin{equation}  |I \intersect S| \geq k \implies  y_I = 0.
\label{eq:assumption}
\end{equation}
Consider the projection $\project{y}{\powerset[2t-2k]{V}}$ of $y$ to the coordinates corresponding to subsets of size
	at most $2t-2k$ of $V$.
	Then there exist subsets $X_1,X_2,\ldots , X_m$  of $S$ such that $\project{y}{\powerset[2t-2k]{V}}$ is a convex combination of vectors $w^{X_i}$ with the following properties:
	\begin{itemize}
	\item   $w^{X_i}_{\{ j \} }=\left\{ \begin{array}{ll}
	1& \text{ if }j\in X_i\\ 0 &\text{ if }j\in S\setminus X_i; \end{array}\right. $ %	\item For every $i$, $|X_i|<k$;
	\item $w^{X_i}\in \La{t-k}{K}$; and
	\item  if $K_i$ is obtained from $K$ by setting  $x_j=w^{X_i}_{\{ j\} }$ for $j\in S$, then
  $\project{w^{X_i}}{\powerset[2t-2k]{V\backslash S}} \in \La{t-k}{K_i}$. 
	\end{itemize}
%(Recall that $w^X_I = 1$ if $I \subseteq X$ and equal to 0, if $I\cap (S\setminus X)
%\ne \emptyset$.)
\end{theorem}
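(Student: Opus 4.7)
The plan is to define the vectors $w^{X_i}$ as normalizations of the characteristic vectors $z^X$ from Lemma \ref{lem:decomp1}, indexed over the subsets $X \subseteq S$ with $|X| \leq k-1$ and $z^X_\emptyset > 0$, and then verify three things: (i) the resulting decomposition of $\project{y}{\powerset[2t-2k]{V}}$ is a convex combination; (ii) each $w^X$ lies in $\La{t-k}{K}$; and (iii) its restriction to subsets of $V \setminus S$ lies in $\La{t-k}{K_i}$.

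I start by applying Lemma \ref{lem:decomp1} to the given $S$. For any $I$, each term $y'_{I \cup J}$ appearing in $z^X_I$ with $|J| \geq k$ satisfies $|(I \cup J) \cap S| \geq |J| \geq k$, so it vanishes by hypothesis (\ref{eq:assumption}) (or is $0$ by extension when $I \cup J$ lies outside the domain of $y$). Hence $z^X = 0$ whenever $|X| \geq k$, and the identity $y' = \sum_{X \subseteq S} z^X$ collapses on the relevant range to $\project{y}{\powerset[2t-2k]{V}} = \sum_{|X| \leq k-1} z^X_\emptyset \, w^X$, with $w^X := z^X / z^X_\emptyset$ when that denominator is positive. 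Setting $I = \emptyset$ in Lemma \ref{lem:decomp1} gives $\sum_X z^X_\emptyset = y_\emptyset = 1$, and nonnegativity of each $z^X_\emptyset$ will follow from the PSD step below. The identities $w^X_{\{j\}} = 1$ for $j \in X$ and $w^X_{\{j\}} = 0$ for $j \in S \setminus X$ are immediate from Corollary \ref{lem:w-integer}.

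The main obstacle is showing $w^X \in \La{t-k}{K}$, which reduces to $M_{\powerset[t-k]{V}}(z^X) \isPSD$ and $M_{\powerset[t-k-1]{V}}(g_\ell * z^X) \isPSD$ for every constraint $g_\ell$. A naive application of Lemma \ref{lem:star-psd} fails because $\powerset[t-k]{V}$ is not closed under shifting by $S$ unless $S$ is small. I plan to circumvent this by enlarging to ${\cal T} := \{I \subseteq V : |I \setminus S| \leq t-k\}$, which is closed under shifting by $S$ and contains $\powerset[t-k]{V}$, and by extending $y$ to $\cal T$ by zero outside $\powerset[2t]{V}$. Starting from a Gram representation $\{v_I\}_{I \in \powerset[t]{V}}$ of $M_{\powerset[t]{V}}(y)$, I set $v_I := 0$ for $I \in {\cal T} \setminus \powerset[t]{V}$; this is consistent because every such index satisfies $|I \cap S| > k$, forcing all the corresponding entries to vanish by hypothesis (\ref{eq:assumption}). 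Thus the extended matrix is PSD on $\cal T$, and Lemma \ref{lem:star-psd} yields $M_{\cal T}(z^X) \isPSD$, which restricts to the required principal submatrix on $\powerset[t-k]{V}$. The $g_\ell$ case is analogous: a short calculation shows that $g_\ell * y$ inherits the vanishing property on $\{I : |I \cap S| \geq k\}$, so the same enlargement combined with Lemma \ref{lemma:commutative} gives $M_{\powerset[t-k-1]{V}}(g_\ell * z^X) \isPSD$.

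For the final assertion, let $\hat w := \project{w^X}{\powerset{V \setminus S}}$. Using Lemma \ref{lem:w}, $w^X_J = 0$ whenever $J$ meets $S \setminus X$ and $w^X_{J \cup H} = w^X_J$ for $H \subseteq X$, so a direct computation shows that $(g_\ell * w^X)_I$ for $I \subseteq V \setminus S$ coincides with $(\hat g_\ell * \hat w)_I$, where $\hat g_\ell$ is the constraint of $K_i$ obtained by substituting $x_j = w^X_{\{j\}}$ for $j \in S$. The Lasserre PSD conditions for $\hat w$ then follow as principal submatrices of the corresponding matrices for $w^X$, establishing $\hat w \in \La{t-k}{K_i}$. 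The crux of the whole proof is the enlargement trick in the previous paragraph, where hypothesis (\ref{eq:assumption}) plays a double role: killing the $z^X$ with $|X| \geq k$ and justifying the zero-extension of $y$ needed to make Lemma \ref{lem:star-psd} applicable.
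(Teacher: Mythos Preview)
Your proposal is correct and follows essentially the same route as the paper: your enlargement to ${\cal T}=\{I:|I\setminus S|\leq t-k\}$ (closed under shifting by $S$, with the hypothesis~(\ref{eq:assumption}) forcing the extra entries of $y'$ to vanish) is exactly the paper's auxiliary lemma, and your handling of the convex combination and of the passage to $K_i$ via Lemma~\ref{lem:w} matches the paper almost line for line. The one step you glide over is that $z^X_\emptyset=0$ must force $\project{z^X}{\powerset[2t-2k]{V}}=0$ (otherwise those $X$ cannot simply be dropped from the sum); the paper isolates this as a separate lemma, but it is an immediate consequence of the PSDness of $M_{\cal T}(z^X)$ you have already established.
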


To prove Theorem~\ref{thm:la-decomp}, we will need a couple more lemmas. In the first one, using assumption~(\ref{eq:assumption}), we extend the positive semi-definite properties from $y$ to $y'$, and then, using Lemma~\ref{lem:star-psd}, from $y'$ to $z^X$.
\begin{lemma}
  \label{lem:extended-matrix}
  Let $t,y,S, k$ be defined as in Theorem~\ref{thm:la-decomp}, and $y'$ be the extension of $y$. Let ${\cal T}_1 = \set{A \text{ such that } {|A \backslash S| \leq t-k}}$, and
  ${\cal T}_2 = \set{B \text{ such that } {|B \backslash S| < t-k}}$.
  Then  for all $X \subseteq S$,  $M_{{\cal T}_1}(z^X)\isPSD$
   and, for all $\ell$, $M_{{\cal T}_2}(g_\ell * z^X)\isPSD$ .
   \end{lemma}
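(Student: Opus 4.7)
The plan is first to establish the PSD properties of the matrices $M_{{\cal T}_1}(y')$ and $M_{{\cal T}_2}(g_\ell * y')$ built directly from the extension $y'$, and then to transfer them to $z^X = P^X * y'$ by invoking Lemma~\ref{lem:star-psd} (together with commutativity of the shift operator).

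For the first step, since $y \in \La{t}{K}$, the matrix $M_{\powerset[t]{V}}(y)$ is PSD, so it has a Gram factorization $y_{I \cup J} = \dotprod{V_I}{V_J}$ with $V_I$ defined for $I \in \powerset[t]{V}$. I would define new vectors $\widetilde{V}_I$ for $I \in {\cal T}_1$ by setting $\widetilde{V}_I = V_I$ whenever $|I \cap S| < k$, and $\widetilde{V}_I = 0$ otherwise. This is well posed because $|I \cap S| < k$ and $|I \setminus S| \leq t-k$ together give $|I| \leq t-1$, so $V_I$ exists. Then I would verify $\dotprod{\widetilde{V}_I}{\widetilde{V}_J} = y'_{I \cup J}$ in two cases. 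When $|I \cap S| < k$ and $|J \cap S| < k$, the union has size at most $2t-2$, so $y'$ agrees with $y$ there and the inner product equals $y_{I \cup J}$. Otherwise, $|(I \cup J) \cap S| \geq k$, so hypothesis~(\ref{eq:assumption}) plus the definition of extension force $y'_{I \cup J} = 0$, matching the fact that at least one of $\widetilde{V}_I, \widetilde{V}_J$ is zero.

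The second step is an exact analog using the Gram vectors $W_I$ for $M_{\powerset[t-1]{V}}(g_\ell * y) \isPSD$. For $I \in {\cal T}_2$ with $|I \cap S| < k$, the size bound tightens to $|I| \leq t-2$, so $W_I$ is defined; set $\widetilde{W}_I = W_I$ in that case and $\widetilde{W}_I = 0$ otherwise. The identity $\dotprod{\widetilde{W}_I}{\widetilde{W}_J} = (g_\ell * y')_{I \cup J}$ follows in the "both small" case because every index $I \cup J \cup H$ appearing in the shifted sum has size at most $2t-3$ (since $g_\ell$ has degree $1$), hence lies in the domain of $y$; and in the "at least one large" case because for every $H$ the set $(I \cup J \cup H) \cap S$ still has at least $k$ elements, killing every $y'_{I \cup J \cup H}$ in the sum.

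Finally, I would check that both ${\cal T}_1$ and ${\cal T}_2$ are closed under shifting by $S$: if $Y$ belongs and $X \subseteq S$, then $(X \cup Y) \setminus S = Y \setminus S$, so the defining cardinality bound is preserved. Applying Lemma~\ref{lem:star-psd} with $y$ replaced by $y'$ and ${\cal T} = {\cal T}_1$ yields $M_{{\cal T}_1}(z^X) \isPSD$ directly. Applying it with $y$ replaced by $g_\ell * y'$ and ${\cal T} = {\cal T}_2$ gives $M_{{\cal T}_2}(P^X * (g_\ell * y')) \isPSD$, and by commutativity of the shift (Lemma~\ref{lemma:commutative}) we have $P^X * (g_\ell * y') = g_\ell * (P^X * y') = g_\ell * z^X$, as desired. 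The main obstacle is the careful size bookkeeping in steps one and two that justifies replacing $y$ by its zero-extension $y'$: one must ensure simultaneously that the Gram vectors are defined, that every index appearing in the shifted sums lies within the original domain of $y$ so that assumption~(\ref{eq:assumption}) can be invoked, and that indices landing outside the domain are precisely those on which $y'$ vanishes by extension.
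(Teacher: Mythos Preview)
Your proposal is correct and follows essentially the same two-step strategy as the paper: first show $M_{{\cal T}_1}(y')\isPSD$ and $M_{{\cal T}_2}(g_\ell*y')\isPSD$, then transfer to $z^X$ via closure under shifting, Lemma~\ref{lem:star-psd}, and commutativity (Lemma~\ref{lemma:commutative}). The only cosmetic difference is in how step one is phrased: the paper splits rows and columns according to whether $|I|<t$ or $|I|\geq t$ and observes a $\begin{pmatrix}M&0\\0&0\end{pmatrix}$ block structure with $M$ a principal submatrix of the known PSD matrix, whereas you split according to whether $|I\cap S|<k$ and explicitly build Gram vectors $\widetilde V_I$ by zeroing out the others; since for $I\in{\cal T}_1$ one has $|I\cap S|<k\Rightarrow |I|<t$, your zero block contains the paper's, but both presentations encode the same argument.
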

\begin{proof} We will first prove that $M_{{\cal T}_1}(y')\isPSD$ and, for all $\ell$, $M_{{\cal T}_2}(g_\ell * y')\isPSD$.  
 Order the columns and rows of $M_{{\cal T}_1}(y')$  by subsets of non-decreasing size.
By definition of ${\cal T}_1$, any $I\in{\cal T}_1$ of size at least $t$ must have   $|I \intersect S| \geq k$, and so $y'_I=0$. Thus
$$    M_{{\cal T}_1}(y') = 
      \begin{pmatrix}
        M & 0 \\
        0 & 0 
      \end{pmatrix},$$
      where $M$ is a principal submatrix of $M_{\powerset[t]{V}}(y)$.Thus $M\isPSD$, and so $M_{{\cal T}_1}(y')\isPSD$.
      
Similarly, any $J\in{\cal T}_2$ of size at least $t-1$ must have $|J\cup \{ i\} \intersect S| \geq k$ for every $i$ as well as $|J\intersect S|\geq k$, and so, by definition of $g_\ell * y'$ we must have $(g_\ell * y')_J = 0$. Thus 
$$    M_{{\cal T}_2}(g_\ell * y') =
      \begin{pmatrix}
        N & 0 \\
        0 & 0 
      \end{pmatrix},
 $$ where  $N$ is a 
  principal submatrix of $M_{\powerset[t-1]{V}}(g_\ell * y)$.
  Thus $N\isPSD$, and so  $M_{{\cal T}_2}(g_\ell * y')\isPSD$.
  
  Observe that ${\cal T}_1$ is closed under shifting 
  by $S$. By definition of $z^X$ and Lemma~\ref{lem:star-psd}, we thus get $M_{{\cal T}_1}(z^X)\isPSD$.

  Similarly, observe that ${\cal T}_2$ is also closed under shifting 
  by $S$.  By Lemma~\ref{lemma:commutative}, we have $g_\ell* (P^X* y')=P^X* (g_\ell * y')$, and so
  by Lemma~\ref{lem:star-psd} again we get $M_{{\cal T}_2}(g_\ell* z^X)\isPSD$.
\end{proof}

\begin{lemma}
  \label{lem:decomp2}
  Let $t,y,S, k$ be defined as in Theorem~\ref{thm:la-decomp}, and $y'$ be the extension of $y$. 
  Then for any $X \subseteq S$:
\begin{enumerate}
\item $z^X_{\emptyset} \ge 0$.
\item If $z^X_{\emptyset} = 0$ then $z^X_I = 0$ for all
  $|I| \leq 2t - 2k$.
\end{enumerate}
\end{lemma}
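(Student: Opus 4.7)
My plan is to invoke Lemma~\ref{lem:extended-matrix}, which guarantees that $M_{{\cal T}_1}(z^X) \isPSD$ for ${\cal T}_1 = \{A : |A\setminus S| \le t-k\}$, and to extract both claims from this single PSD condition. Since $\emptyset\in{\cal T}_1$, claim~1 is immediate: $z^X_\emptyset = (M_{{\cal T}_1}(z^X))_{\emptyset,\emptyset}$ is a diagonal entry of a PSD matrix, hence nonneg.

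For claim~2, I would suppose $z^X_\emptyset = 0$. The key observation specific to this particular matrix is a coincidence along its $\emptyset$-row: for every $A\in{\cal T}_1$, both the off-diagonal entry $(M_{{\cal T}_1}(z^X))_{\emptyset,A}$ and the diagonal entry $(M_{{\cal T}_1}(z^X))_{A,A}$ equal $z^X_A$, because $\emptyset\cup A = A\cup A = A$. Applying the standard $2\times 2$ minor inequality $M_{\emptyset,A}^2 \le M_{\emptyset,\emptyset}\cdot M_{A,A}$ then yields $(z^X_A)^2 \le 0$, so $z^X_A = 0$ for every $A\in{\cal T}_1$. Writing $M_{{\cal T}_1}(z^X)$ as a Gram matrix $\dotprod{v_A}{v_B} = z^X_{A\cup B}$, this forces $\|v_A\|^2 = z^X_A = 0$ and hence $v_A = 0$ for every $A\in{\cal T}_1$.

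To upgrade from ``every $A\in{\cal T}_1$'' to ``every $I$ with $|I|\le 2t-2k$,'' I would invoke Lemma~\ref{lem:w} to restrict attention to $I$ disjoint from $S\setminus X$ (otherwise $z^X_I$ already vanishes) and then reduce to $I' := I\setminus X$, which is contained in $V\setminus S$ and still satisfies $|I'|\le 2t-2k$. I then split $I' = A\cup B$ into two subsets of $V\setminus S$ each of size at most $t-k$; both belong to ${\cal T}_1$, so $z^X_I = z^X_{I'} = \dotprod{v_A}{v_B} = 0$.

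The only real subtlety is noticing the two-fold coincidence $M_{\emptyset,A} = M_{A,A} = z^X_A$; once this is in hand the argument reduces to the usual ``zero diagonal kills the row/column'' trick for PSD matrices, combined with Lemma~\ref{lem:w} to handle indices $I$ that partly live in $S$. I do not anticipate any deeper obstacle.
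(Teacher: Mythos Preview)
Your argument is correct and follows the same route as the paper: both invoke Lemma~\ref{lem:extended-matrix} to get $M_{{\cal T}_1}(z^X)\isPSD$, read off $z^X_\emptyset\ge 0$ as a diagonal entry, and then use $2\times 2$ principal minors to kill $z^X_A$ for small $A$ before splitting a general $I$ as $A\cup B$.

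The only difference is cosmetic, and your detour through Lemma~\ref{lem:w} is unnecessary: once you have shown $z^X_A=0$ (equivalently $v_A=0$) for \emph{every} $A\in{\cal T}_1$, you can split an arbitrary $I$ with $|I|\le 2t-2k$ directly as $I=A\cup B$ with $|A|,|B|\le t-k$; since $|A\setminus S|\le |A|\le t-k$, both halves already lie in ${\cal T}_1$ regardless of how they intersect $S$, and $z^X_I=\langle v_A,v_B\rangle=0$. The paper does exactly this, avoiding any appeal to Lemma~\ref{lem:w}.
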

\begin{proof}
Let ${\cal T}_1$ be defined as in Lemma~\ref{lem:extended-matrix}.
By Lemma~\ref{lem:extended-matrix}
  $   M_{{\cal T}_1}(z^X)\isPSD$
and 
$z^X_{\emptyset}$ is a diagonal element of this matrix, hence $z^X_{\emptyset} \ge 0$.

For the second part,
  start by considering  $J\subseteq V$ of size at most $ t-k$. Then 
  $J \in {\cal T}_1$, and so 
  the matrix $M_\set{\emptyset, J}(z^X)$ is  a principal submatrix of 
  $M_{{\cal T}_1}(z^X)$, hence is also positive semidefinite. Since $z^X_\emptyset =0$,
   \begin{equation*}
    M_\set{\emptyset, J}(z^X) =
     \begin{pmatrix}
        0 & z^X_J \\
        z^X_J & z^X_J
      \end{pmatrix}\isPSD ,
  \end{equation*}
  hence $z^X_J = 0.$

  Now consider any $I \subseteq V$ such that $|I| \leq 2t-2k$, and write
  $I = I_1 \union I_2$ where $|I_1| \leq t-k$ and $|I_2| \leq t-k$.
  $M_\set{I_1, I_2}(z^X)$ is  a principal submatrix of 
  $M_{{\cal T}_1}(z^X)$, hence is also positive semidefinite. Since $z^X_{I_1}=z^X_{I_2}=0$,
  Since 
  \begin{equation*}
    M_\set{I_1, I_2}(z^X) =
     \begin{pmatrix}
        0 & z^X_I \\
        z^X_I & 0
      \end{pmatrix}\isPSD ,
  \end{equation*}
 hence $z^X_I = 0.$
\end{proof}

We now have what we need to prove Theorem~\ref{thm:la-decomp}.
\vspace{.5cm}

\begin{proof}[Proof of Theorem~\ref{thm:la-decomp}]
By definition,  Lemma~\ref{lem:decomp1} and the second part of Lemma~\ref{lem:decomp2}, we have
  \begin{equation*}
    \project{y}{\powerset[2t-2k]{V}} =
      \project{y'}{\powerset[2t-2k]{V}} =
      \sum_{X \subseteq S}\project{z^X}{\powerset[2t-2k]{V}} =
      \sum_{X \subseteq S}{z^X_\emptyset}\project{w^X}{\powerset[2t-2k]{V}}.
  \end{equation*}
By Lemma~\ref{lem:decomp1} and by definition of $y$, we have $\sum_{X \subseteq S}{z^X_\emptyset}=y_{\emptyset}=1$, and the terms are non-negative by  the first part of
Lemma~\ref{lem:decomp2}, so
$\project{y}{\powerset[2t-2k]{V}}$ is a convex combination of $w^X$'s, as desired.

%If   $|X| \ge k$ then by definition of $z^X_{\emptyset}$ and the assumption of Theorem~\ref{thm:la-decomp}, we have $z^X_{\emptyset} = 0 $, so
%$w^X$ can only appear in the linear combination with a non-zero coefficient
%if  $|X| < k$, as desired.

Consider $X \subseteq S$ 
  such that $z^X_\emptyset \neq 0$.
  By Lemma~\ref{lem:extended-matrix}, $M_{{\cal T}_1}(z^X)\isPSD$ and $M_{{\cal T}_2}(g_\ell * z^X)\isPSD$ for all $\ell$, and so this also holds for their principal submatrices  $M_{\powerset[t-k]{V}}({z^X}) $ and 
  $M_{\powerset[t-k-1]{V}}(g_\ell * z^X)$. Scaling by the positive quantity $z^X_\emptyset $, by definition of $w^X$ this also holds for $M_{\powerset[t-k]{V}}({w^X})$ and 
  $M_{\powerset[t-k-1]{V}}(g_\ell * w^X)$. In other words,  $\project{w^X}{\powerset[2t-2k]{V}} \in \La{t-k}{K}$.

Since 
$M_{\powerset[t-k]{V}}(w^{X_i})\isPSD$, by  taking a principal submatrix, we infer $M_{\powerset[t-k]{V\backslash S}}(w^{X_i})\isPSD$.
Similarly, $M_{\powerset[t-k]{V}}(g_\ell*w^{X_i})\isPSD$ and so $M_{\powerset[t-k]{V\backslash S}}(g_\ell*w^{X_i})\isPSD$. 
Let $g'_\ell$ be the constraint of $K_i$ obtained from $g_\ell$ by setting
$x_j = w^{X_i}_\set{j}$ for all $j \in S$.
We claim that for any $I \subseteq V \backslash S$, 
$(g'_\ell * z^{X_i} )_I = (g_\ell * z^{X_i})_I$;
scaling implies that $M_{\powerset[t-k]{V\backslash S}}(g'_\ell*w^{X_i})=M_{\powerset[t-k]{V\backslash S}}(g_\ell*w^{X_i})$ and we are done. 

To prove the claim, let $g_\ell(x) = \sum_{j \in V}a_jx_j + b$.
Then, by Corollary~\ref{lem:w-integer}, $g'_\ell = \sum_{j \in V\backslash S}a_jx_j + (b + \sum_{j \in X_i}a_j)$.
Let $I \subseteq V\backslash S$. We see that
$$	(g_\ell*w^{X_i})_I -(g'_\ell*w^{X_i})_I
		= \sum_{j \in X_i}a_j	w^{X_i}_{I\union\set{j}} 
		+ \sum_{j \in S\backslash X_i}a_j w^{X_i}_{I\union\set{J}}
- \sum_{j \in X_i}a_j w^{X_i}_{I}.
$$
By Lemma~\ref{lem:w}, $w^{X_i}_{I\union\set{j}} = w^{X_i}_{I}$ for $j \in X_i$ and
$w^{X_i}_{I\union\set{j}} = 0$ for $j \in S\backslash X_i$. The claim follows.
\end{proof}

\section{Upper bound for the Lasserre hierarchy for $\name{Knapsack}$}
In this section, we use Theorem~\ref{thm:la-decomp} to prove that  for the $\Knapsack$ problem the gap
of $\La{t}{K}$ approaches $1$ quickly as $t$ grows, where $K$ is 
the LP relaxation of (\ref{eq:knapsack-lp}).
First, we show that there is a set $S$ such that every feasible solution in $ \La{t}{K}$ 
satisfies the condition of the Theorem.

Given an instance $(C,V)$ of  $\Knapsack$, Let $OPT(C,V)$ denote the value of the optimal 
integral solution.

%%%%%%%%%%%%%%%%%%%%%%%%%%%%%%%%%%%%%
\begin{lemma}
  \label{lem:knapsack-bigobjects}
  Consider an instance $(C,V)$ of $\Knapsack$ and its linear programming relaxation $K$ given by (\ref{eq:knapsack-lp}). Let $t>1$ and $y \in \La{t}{K}$. Let $k<t$ and $S = \set{i \in V\st{v_i > {OPT(C,V)}/{k}}}$.
  Then:
\begin{equation*}  \sum_{i\in I \intersect S} c_i >C \implies  y_I = 0.
\end{equation*}
\end{lemma}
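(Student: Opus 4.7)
The plan is to prove a direct pointwise claim on small sets, reduce the lemma to it using the value structure of $S$, and then extend to arbitrary $I$ via a PSD submatrix argument. The key claim is: if $A\subseteq V$ has $|A|\le t-1$ and $\sum_{i\in A}c_i>C$, then $y_A=0$. To see this, observe that for the capacity constraint $g(x)=C-\sum_i c_ix_i$ the diagonal entry $(g*y)_A$ of $M_{\powerset[t-1]{V}}(g*y)\isPSD$ is nonnegative. Using $y_{A\union\{i\}}=y_A$ when $i\in A$, we expand
$$0 \;\le\; (g*y)_A \;=\; \Bigl(C-\sum_{i\in A}c_i\Bigr)\,y_A \;-\; \sum_{i\notin A}c_i\,y_{A\union\{i\}}.$$
The coefficient $C-\sum_{i\in A}c_i$ is strictly negative, and $y_A,y_{A\union\{i\}}\ge 0$ are nonnegative diagonal entries of $M_{\powerset[t]{V}}(y)\isPSD$; hence each of the nonpositive terms on the right must vanish, forcing $y_A=0$.

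To reduce the lemma to this claim, I exhibit a witness $T\subseteq I\cap S$ with $|T|\le k$ and $\sum_{i\in T}c_i>C$. If $|I\cap S|<k$, take $T=I\cap S$, so the hypothesis directly gives $\sum_T c_i>C$. Otherwise, choose any $T\subseteq I\cap S$ with $|T|=k$: such a $T$ must violate the capacity, for if not, then $T$ would be a feasible integer solution whose value $\sum_{i\in T}v_i$ would exceed $k\cdot OPT(C,V)/k = OPT(C,V)$, contradicting the definition of $OPT(C,V)$. Since $k<t$, we have $|T|\le k\le t-1$, and the key claim yields $y_T=0$.

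It remains to pass from $y_T=0$ to $y_I=0$. Monotonicity $y_J\ge y_{J\union\{i\}}$ whenever $|J\union\{i\}|\le t$ follows from the $2\times 2$ principal submatrix of $M_{\powerset[t]{V}}(y)\isPSD$ indexed by $\{J,\,J\union\{i\}\}$. When $|I|\le t$, chaining monotonicity along $T\subseteq I$ gives $y_I\le y_T=0$. When $t<|I|\le 2t$, choose $I_1,I_2\subseteq I$ with $T\subseteq I_1$, $I_1\union I_2=I$, and $|I_1|,|I_2|\le t$; the PSD submatrix $M_{\{I_1,I_2\}}(y)$ yields $y_I^{\,2}\le y_{I_1}y_{I_2}$, while monotonicity from $T$ to $I_1$ forces $y_{I_1}=0$, so $y_I=0$. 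The main difficulty is the reduction step: an arbitrary minimal capacity-infeasible subset of $V$ can be too large for the key claim to apply directly. Restricting to $S$ is what removes this obstacle, because the value floor $v_i>OPT(C,V)/k$ ensures that any $k$ items of $S$ already overflow the capacity, guaranteeing a witness $T$ of size at most $k\le t-1$.
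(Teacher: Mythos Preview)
Your proof is correct and follows essentially the same approach as the paper: use the nonnegativity of the diagonal entry $(g*y)_A$ for small $A$, then propagate to larger $I$ via $2\times 2$ PSD principal submatrices. Your organization---first extracting a witness $T\subseteq I\cap S$ of size at most $k$ and then chaining monotonicity---is a bit cleaner than the paper's three-way case split on $|I|$, and in particular it makes explicit the case $|I\cap S|<k$ that the paper's phrasing in Case~2 glosses over.
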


\begin{proof}
  There are three cases depending on the size of $I$:
  \begin{enumerate}
  \item $|I| \leq t-1$.
 Recall the capacity constraint $g(x) = C - \sum_{i \in V}c_ix_i\geq 0$.
  On the one hand, since $M_{\powerset[t-1]{V}}(g* y) \isPSD$, the diagonal  entry  $(g* y)_I$ must be 
  non-negative. On the other hand, writing out the definition of $(g*y)_I$ and noting that the coefficients $c_i$ are all non-negative, we infer
 $    (g* y)_I \leq Cy_I - \left(\sum_{i \in I}c_i\right)y_I$.
But by assumption, $\sum_{i \in I}c_i > C$. Thus we must have $y_I = 0$.

  \item $t \leq |I| \leq 2t-2$.
  Write $I=I_1 \union I_2 = I$ with $|I_1|, |I_2| \leq t-1$ and 
  $|I_1 \intersect S| \geq k$.
  Then $y_{I_1} = 0$.
  Since $M_{\powerset[t]{y}}\isPSD$, its 2-by-2 principal 
  submatrix $M_\set{I_1, I_2}(y) $ must also be positive semi-definite.
   \begin{equation*}
    M_\set{I_1, I_2}(y) =
     \begin{pmatrix}
        0 & y_I \\
        y_I & y_{I_2}
      \end{pmatrix} ,
  \end{equation*}
and it is easy to check that we must then have $y_I = 0$.
  
  \item $2t-1 \leq |I| \leq 2t$. Write $I=I_1 \union I_2 = I$ with $|I_1|, |I_2| \leq t$ and 
  $|I_1 \intersect S| \geq k$. Then $y_{I_1} = 0$ since $t \leq 2t-2$
  for all $t \geq 2$.
  By the same argument as in the previous case, we must then have $y_I = 0$.
  \end{enumerate}
\end{proof}

The following theorem shows that the integrality gap of the $t^{th}$ level
of the Lasserre hierarchy for $\Knapsack$ reduces quickly when $t$ 
increases. 
%The intuition behind our original proof was two facts: (i) if all the objects
%in the $\Knapsack$ instance are big (i.e. $c_i > C/k$ for all $i$) then $k$
%rounds of Lasserre close the gap completely, and (ii) if all the objects in
%the instance are small (i.e. $c_i \leq C/k$ for all $i$), then \emph{Greedy} 
%construct a good solution.
%Thus, it's natural to partition the objects into the big and small ones and 
%apply Theorem~\ref{thm:la-decomp}.
%Here, we present a more succinct and tighter proof which, however, makes this
%intuition somewhat harder to observe.

\begin{theorem}
 Consider an instance $(C,V)$ of $\Knapsack$ and its  LP relaxation $K$ given by (\ref{eq:knapsack-lp}). Let $t\geq 2$.
 Then  
 $$\text{Value}( \La{t}{K})\leq (1+\frac{1}{t-1}) OPT,$$
 and so the integrality gap at the $t$-th level of the Lasserre hierarchy is at most $1+1/(t-1)$.
\end{theorem}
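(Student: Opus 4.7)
The plan is to apply the decomposition theorem (Theorem~\ref{thm:la-decomp}) with $k = t-1$ and $S = \set{i \in V : v_i > OPT/(t-1)}$, the set of high-value items used in Lemma~\ref{lem:knapsack-bigobjects}. First I would verify the hypothesis $|I \cap S| \geq k \implies y_I = 0$ of Theorem~\ref{thm:la-decomp}. The key observation is that any $k = t-1$ items in $S$ have total value strictly greater than $(t-1) \cdot OPT/(t-1) = OPT$, so they cannot jointly form a feasible integer solution, which means their total cost must strictly exceed $C$. Lemma~\ref{lem:knapsack-bigobjects} then yields $y_I = 0$ whenever $|I \cap S| \geq k$.

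Next I would apply Theorem~\ref{thm:la-decomp}. Since $2t - 2k = 2$, the projection $\project{y}{\powerset[2]{V}}$ becomes a convex combination of vectors $w^{X_i}$ indexed by subsets $X_i \subseteq S$, where $w^{X_i}_{\set{j}} = 1$ for $j \in X_i$, $w^{X_i}_{\set{j}} = 0$ for $j \in S\setminus X_i$, and $\project{w^{X_i}}{\powerset[2]{V \setminus S}} \in \La{1}{K_i}$, with $K_i$ the residual Knapsack LP on items $V \setminus S$ with capacity $C_i = C - \sum_{j \in X_i} c_j$. Feasibility of $w^{X_i}$ in $K$ ensures $C_i \geq 0$, so $K_i$ is a valid Knapsack instance. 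The objective value of each $w^{X_i}$ splits as $\sum_{j \in X_i} v_j + \sum_{j \in V\setminus S} v_j w^{X_i}_{\set{j}}$, and the second sum is at most $\text{Value}(K_i)$ since $\La{1}{K_i} \subseteq K_i$.

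To finish I would apply Lemma~\ref{ref:known} to the residual instance, giving $\text{Value}(K_i) \leq \text{Value}(\text{Greedy}(C_i, V\setminus S)) + \max_{j \in V\setminus S} v_j$. By construction every $j \in V \setminus S$ has $v_j \leq OPT/(t-1)$, so the max is at most $OPT/(t-1)$. Moreover, concatenating the items of $X_i$ with the greedy solution on the residual produces a feasible integer solution to the original $\Knapsack$, so $\sum_{j \in X_i} v_j + \text{Value}(\text{Greedy}(C_i, V\setminus S)) \leq OPT$. Combining, each $w^{X_i}$ has value at most $OPT + OPT/(t-1) = (1 + 1/(t-1))\, OPT$, and by convexity so does $y$. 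The subtle point is the calibrated choice of $k$ and $S$: taking $k = t-1$ is exactly the largest value for which the decomposition hypothesis continues to hold, while correspondingly making the items outside $S$ small enough that Lemma~\ref{ref:known} loses only $OPT/(t-1)$, producing the claimed $1 + 1/(t-1)$ bound.
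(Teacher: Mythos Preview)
Your proposal is correct and follows essentially the same route as the paper: the same choice $k=t-1$, the same set $S$ of high-value items, the same use of Lemma~\ref{lem:knapsack-bigobjects} to verify the hypothesis of Theorem~\ref{thm:la-decomp}, and the same appeal to Lemma~\ref{ref:known} on the residual instance to bound each $w^{X_i}$. The only cosmetic difference is that you keep the greedy value explicitly where the paper immediately replaces it by $OPT(C_i, V\setminus S)$; both lead to the same bound.
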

\begin{proof}
Let $S = \set{i \in V\st{v_i > {OPT(C,V)}/{(t-1)}}}$. Let $y \in \La{t}{K}$. 
If $|I\intersect S|\geq t-1$, then the elements of $I\cap S$ have total value greater than $OPT(C,V)$, so they must not be able to fit in the knapsack: their total capacity exceeds $C$, and so by 
Lemma~\ref{lem:knapsack-bigobjects} we have $y_I=0$. Thus  the condition of Theorem~\ref{thm:la-decomp} holds for $k=t-1$.

Therefore,
  $\project{y}{\powerset[2]{V}}$ is a convex combination of $w^{X_i}$ with $X_i\subseteq S$, thus $\text{Value}(y)\leq \max_i \text{Value}(w^{X_i})$. 
	%$|X_i| \leq t-2$,
By the first and third properties of the Theorem, we have:
$$\text{Value}(w^{X_i})\leq \sum_{j\in X_i} v_j + \text{Value}(\La{1}{K_i}).$$
By the nesting property of the Lasserre hierarchy, Lemma~\ref{ref:known}, and definition of $S$,
$$\text{Value}(\La{1}{K_i})\leq \text{Value}(K_i)\leq OPT(C-\text{Cost}(X_i),V\setminus S))+OPT(C,V)/(t-1).$$
By the second property of the Theorem, $w^{X_i}$ is in  $ \La{t-k}{K}\subseteq K$, so it must satisfy the capacity constraint, so 
$\sum_{i\in X_i}c_i \leq \sum_{i\in I } c_i \leq C$, so $X_i$ is feasible. Thus:
\begin{eqnarray*}
\text{Value}(y)&\leq & \max_{\text{feasible }X\subseteq S} \left(
		\sum_{j\in X} v_j + OPT(C-\text{Cost}(X),V\setminus S)) \right) +OPT(C,V)/(t-1)\\
\end{eqnarray*}
The first expression in the right hand side is equal to $OPT(C,V)$, hence the Theorem.
\end{proof}

\section{Conclusion}
We have shown that for $\Knapsack$, an integrality gap of $2-\epsilon$ persists
up to a linear number of rounds in the Sherali-Adams hierarchy.
This broadens the class of problems for which 
Sherali-Adams is not strong enough to capture the instrinsic difficulity
of problems.
%\footnote{We also observe that the Sherali-Adams hierarchy fails to close
%the gap for some natural LPs for spanning tree.}.

On the other hand, our positive result for Lasserre opens the posibility that
lower bounds in the Lasserre hierarchy good indicators of the intrinsic
dificulty of the problem, thus
encourages more investigation on the effect of the hierarchy on 
``easy'' problems ($\name{Spanning Tree}$, $\name{Bin Packing}$, etc.)

One obstacle along this line is the fact that the second positive semidefinite 
constraint of the hierarchy ($M_{\powerset{t}{V}}({g_\ell * y}) \isPSD$)
is notoriously hard to deal with, especially when $g_\ell$ contains 
many variables (in the lowerbounds for $k$-CSPs~\cite{Schoen,T}, the authors 
are able to get around this by constructing vectors for only valid assignments, 
an approach that is possible only when all the constraints are ``small''.)
Clearly, both lower bounds and upper bounds for the Lasserre hierarchy for 
problems with large constraints remain interesting to pursue.

\section*{Acknowledgement}
Clare Mathieu would like to thank Eden Chlamtac for stimulating discussions.

\bibliography{knapsack}{}
\bibliographystyle{plain}

\section*{Appendix}
\appendix
\section{Full proof of Theorem~\ref{thm:sa-lowerbound}}
\label{appendix:SAproof}

\begin{proof}
Let $t\geq 2$.
Consider the instance $K$ of $\Knapsack$ with $n$ objects where
$c_i = v_i = 1$ for all $i \in V$ and capacity $C = 2(1-\epsilon)$. 
Let $\alpha = {C}/({n+(t-1)(1-\epsilon)})$ and consider the vector $y \in [0,1]^{\powerset[t]{V}}$ defined by
\begin{align*}
	 \left\{
					\begin{aligned}
						 y_{\emptyset}=1 & \\
						 y_{\{ i \} }=\alpha &\\
						 y_I=0 & \ \ \text{if}\ |I| > 1\\
					\end{aligned}
				\right.
\end{align*}
We claim that $y \in \SA{t}{K}$.	Consider any subset $U \subseteq V$ such that $|U| \leq t$. We have
	\begin{align*}
		M_{\powerset{U}}(y) = 
			\begin{pmatrix}
				M_{\powerset[1]{U}}(y) & 0\\
				0 & 0
			\end{pmatrix},~~~~~\text{with}~~~~
		M_{\powerset[1]{U}}(y) =
			\begin{pmatrix}
				1 		 & \alpha & \alpha & \cdots & \alpha\\
				\alpha & \alpha & 0			 & \cdots & 0\\
				\alpha & 0			& \alpha & \cdots & 0\\
				\vdots & \vdots & \vdots & \ddots & \vdots\\
				\alpha & 0			& 0			 & \cdots & \alpha
			\end{pmatrix}.
	\end{align*}
	Since $|U| \leq t < n$, $|U|\alpha \leq 1$, and  it is easy to see that this implies $M_{\powerset[1]{U}}(y) \isPSD$, and so $M_{\powerset{U}}(y) \isPSD$.
	
	Next, let $g(x) = C - \sum_{i \in V}c_ix_i$ and 
	consider any subset $W \subseteq V$ such that $|W| \leq t-1$.
	Again, we have
	\begin{align*}
		M_{\powerset{W}}(g*y) = 
			\begin{pmatrix}
				M_{\powerset[1]{W}}(g*y) & 0\\
				0 & 0
			\end{pmatrix},~
					M_{\powerset[1]{W}}(g*y) =
			\begin{pmatrix}
				C - n\alpha	& (C-1)\alpha & (C-1)\alpha & \cdots & (C-1)\alpha\\
				(C-1)\alpha & (C-1)\alpha & 0			 			& \cdots & 0\\
				(C-1)\alpha & 0						& (C-1)\alpha & \cdots & 0\\
				\vdots 			& \vdots 			& \vdots 			& \ddots & \vdots\\
				(C-1)\alpha & 0						& 0			 			& \cdots & (C-1)\alpha
			\end{pmatrix}.
	\end{align*}
	Since $|W| \leq t-1$, by definition of $\alpha$ we have $|W|(C-1)\alpha \leq C-n\alpha$, and it is easy to see that this implies $M_{\powerset[1]{W}}(g*y) \isPSD$, and so $M_{\powerset{W}}(g*y) \isPSD$. Thus $y \in \SA{t}{K}$. 
	
The  integer optimum has value  1, so the integrality gap is at least the value of $y$, which is $n\alpha=2(1-\epsilon)/(1+(t-1)(1-2\epsilon)/n)$. The supremum over all $\epsilon$ is $2/(1+(t-1)/n)$, and the supremum of that over all $n$ is 2, so the integrality gap is at least 2.

On the other hand, it is well-known that the base linear program $K$ has value at most $2 OPT$ (that is an immediate consequence of Lemma~\ref{ref:known}), hence, by the nesting property, every linear program in the hierarchy has integrality gap exactly equal to 2.
\end{proof}

\section{A linear formulation of the Sherali-Adams hierarchy}
\label{a:sa}
For any constraint $g_\ell(x) \geq 0$ in the definition of the base polytope
$K$ and any subsets 
$I, J \subseteq V$, the following constraint is a consequence of the
fact that $x \in [0,1]^n$:
\begin{align}
	g_\ell(x)\prod_{i \in I}x_i\prod_{j \in J}(1-x_j) \geq 0.
	\label{eq:sa-prod}
\end{align}
If $x$ is indeed integral, then $x_i^k = x_i$ for any $k \geq 1$.
Thus, the constraint obtained by expanding~\eqref{eq:sa-prod} and replacing
$x_i^k$ by $x_i$ holds in $P$ and can be added to strengthen the 
relaxation.
However, this constraint is not linear.
To preserve the linearity of the system, %we \emph{linearize} it by
each product $\prod_{i \in I}x_i$ is replaced by a variable $y_I$.

In addition, to keep the number of variables from growing exponentially, we restrict 
ourselves to only variables $y_I$ such that $|I| \le t$.
By this, we ``lift'' the polytope $K$ to a polytope 
$\SA{t}{K} \subseteq [0,1]^{\powerset[t]{V}}$.
 
\begin{definition}
  \label{def:SA-linear}
Let $K$ be a polytope defined as in equation~\ref{eq:defK}.
  For any $1 \leq t \leq n$, the \emph{$t$-th Sherali-Adams lifted polytope}
  $\SA{t}{K}$ is
  defined by
  $$
    \SA{t}{K} = \set{
      y \in \powerset[t]{[n]} \st{ 
			y_\emptyset = 1,\ \text{and}\
      g'_{\ell,I,J}(y) \geq 0\ \text{for any}\ \ell\ \text{and}\ 
      I,J \subseteq V\ \text{s.t.}\ |I \union J| \leq t-1 }}
  $$
  where $g'_{\ell,I,J}(y)$ is obtained by:
  \begin{enumerate}
    \item multiplying $g_\ell(x)$ by $\prod_{i\in I}x_i\prod_{j\in J}(1-x_j)$;
    \item expanding the result and replacing each $x_i^k$ by $x_i$; and
    \item replacing each $\prod_{i \in S}x_i$ by $y_S$.
  \end{enumerate}

  We say that a point $x \in [0,1]^V$ belongs to the \emph{$t$-th Sherali-Adams
  polytope $\sa{t}{K}$} iff there exists a $y \in \SA{t}{K}$ such that
  $y_\set{i} = x_i$ for all $i \in V$.
\end{definition}

In particular, in the case of $\Knapsack$, $\SA{t}{K}$ is the set of all 
points in $[0,1]^{\powerset[t]{V}}$ that satisfy the following constraints
for any $I, J \subseteq V$ such that $I \intersect J = \emptyset$ and
$|I| + |J| \leq t-1$:
\begin{equation}
	\sum_{i = 1}^n c_i\sum_{L \subseteq J}(-1)^{|L|}y_{I\union L\union\set{i}}
		\leq C\sum_{L \subseteq J}(-1)^{|L|}y_{I\union L}, 
	\label{eq:knapsack-prod}
\end{equation}
and
$$	0 \leq \sum_{L \subseteq J}(-1)^{|L|}y_{I\union L\union\set{i}}
		\leq \sum_{L \subseteq J}(-1)^{|L|}y_{I\union L},~~ \forall i \in V.$$

For a proof that this definition is equivalent to 
Definition~\ref{def:SA-psd}, we refer the reader to Laurent's paper~\cite{L01}.

\end{document}